\newtheorem{definition}{Definition}
\def\BibTeX{{\rm B\kern-.05em{\sc i\kern-.025em b}\kern-.08em
    T\kern-.1667em\lower.7ex\hbox{E}\kern-.125emX}}
\newtheorem{theorem}{Insight}
\begin{document}

\title{Analyzing Geospatial Distribution in Blockchains\\
}

\author{\IEEEauthorblockN{Shashank Motepalli,
Hans-Arno Jacobsen}
\IEEEauthorblockA{Department of Electrical and Computer Engineering,
University of Toronto\\
shashank.motepalli@mail.utoronto.ca,
jacobsen@eecg.toronto.edu}}

\maketitle

\begin{abstract}
Blockchains are decentralized; are they genuinely? We analyze blockchain decentralization's often-overlooked but quantifiable dimension: geospatial distribution of transaction processing. Blockchains bring with them the potential for geospatially distributed transaction processing. They enable validators from geospatially distant locations to partake in consensus protocols; we refer to them as minority validators. Based on our observations, in practice, most validators are often geographically concentrated in close proximity. Furthermore, we observed that minority validators tend not to meet the performance requirements, often misidentified as crash failures. Consequently, they are subject to punishment by jailing (removal from the validator set) and/or slashing (penalty in native tokens). Our emulations, under controlled conditions, demonstrate the same results, raising serious concerns about the potential for the geospatial centralization of validators. To address this, we developed a solution that easily integrates with consensus protocols, and we demonstrated its effectiveness.\end{abstract}

\begin{IEEEkeywords}
blockchain, geospatial distribution, consensus protocols, decentralization
\end{IEEEkeywords}

\section{Introduction}

A blockchain consists of a peer-to-peer network of validators (or miners) maintaining an immutable ledger. This immutable ledger contains a sequence of blocks linked to each other by appending the cryptographic hash of the last block to the subsequent block~\cite{nakamoto2008bitcoin}. Unlike centralized systems, blockchains foster \textit{decentralization}. Decentralization means processing transactions on a distributed ledger based on cryptographic proofs among a permissionless network of validators. This system eliminates the need for trusted centralized entities~\cite{nakamoto2008bitcoin, buterin2014next}.
Furthermore, decentralization is the fundamental force driving blockchain adoption~\cite{zarrin2021blockchain, chen2020blockchain}. However, despite the widespread usage of the term decentralization, it is unclear if blockchains have achieved the ideal state of decentralization, often considered the 'holy grail' of this technology. This lack of clarity stems from the complexity of quantifying decentralization due to its multi-disciplinary nature, requiring an understanding of technology, society, economics, and politics. This work focuses on a vital but often overlooked dimension contributing to decentralization: the geospatial distribution of transaction processing. 

The geospatial diversity of validators, who process transactions, is vital for blockchains and decentralization. Firstly, boosting geospatial diversity contributes to the \textit{robustness} of the blockchain. In other words, the blockchain would be more resilient to downtimes caused by various factors, such as the lack of electricity required for computation, changing regulations, and even wars. Imagine most validators of a blockchain network located at a data center that does not adhere to fire safety or is prone to natural calamities. This situation is not far from reality; for instance, in September of 2021, an outage at an AWS data center could bring down the entire Solana blockchain~\cite{amazonSolana}. Even worse, a cloud provider chooses to bring down 40\% of all validators of a live blockchain for not adhering to its policies~\cite{HetznerSolana}. Secondly, geospatial diversity facilitates \textit{fairness} of the blockchain resources, notably for time-critical operations, by lowering latencies and providing equitable access to users from all geographic regions. For illustration, consider the advantages of arbitrage traders on a decentralized exchange, such as UniSwap~\cite{adams2021uniswap}, due to geographic proximity to validators. Thirdly, promoting geospatial diversity can reduce the burden of complying with the regulations of jurisdictions to which one does not belong. For instance,  according to SEC, everyone who does transactions on Ethereum falls under the US jurisdiction because the US has around half of the network's validators~\cite{SECEthereum}. This practice is worrisome as one might break the law without being aware of the blockchains' underlying infrastructure. Existing blockchains fail to deliver geospatial diversity. For instance, in the case of Ethereum, a prominent blockchain, 69.91\% of total validators are in two regions, namely the US and Germany~\cite{etherscan}; see Fig.~\ref{fig:ethereumDistribution}.
\begin{figure}[]
  \centering
  \includegraphics[width=0.5\textwidth]{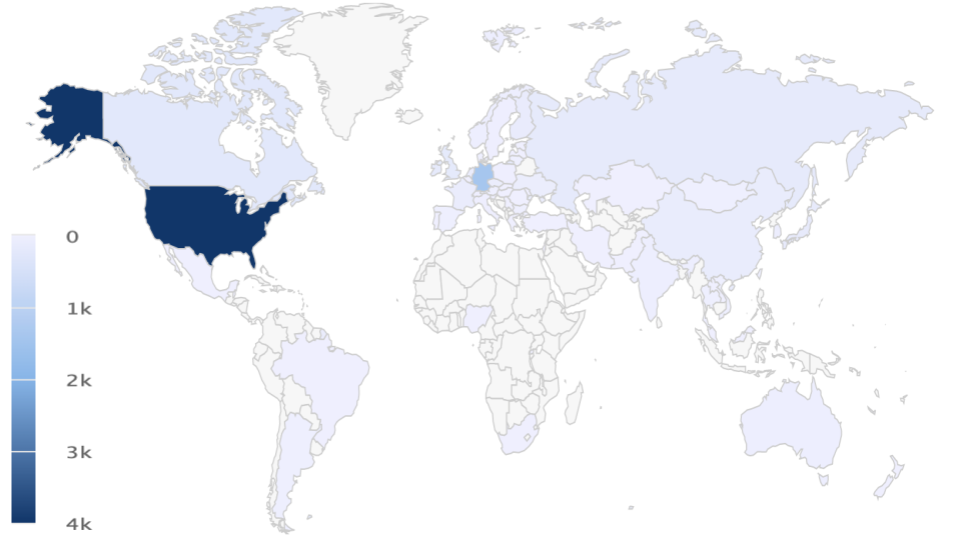}
  \caption{Distribution of validators in the Ethereum network. 53.11\% in the US and 18.29\% in Germany. Source: Etherscan~\cite{etherscan}.}
  \label{fig:ethereumDistribution}
\end{figure}

This paper addresses the problem of \textit{analyzing the geospatial distribution of blockchains} and the factors restraining geospatial diversity in a blockchain network. This problem is relevant to proof of stake (PoS) blockchains that rely on classical consensus mechanisms for creating new blocks, such as Ethereum~\cite{buterin2017casper}, Cosmos~\cite{buchman2016tendermint}, Diem~\cite{baudet2019state}. What makes the problem interesting is that the geospatial location of validators is usually overlooked but is a measurable attribute for decentralization. The interplay between the underlying consensus protocols and networking among validators makes it a challenging problem to dissect. Nevertheless, unlocking new levels of robustness and fairness for blockchains makes it worth pursuing. 

This paper presents GeoDec, an emulator for blockchains' underlying consensus protocol to measure validators' performance and their rewards and penalties. GeoDec helps anyone understand the interplay between a validator's location and performance in a given environment. GeoDec takes in the geospatial locations of validators as parameters and introduces latencies among them based on their ping delays as if the validators' machines were physically present at the given locations. 
Furthermore, by running GeoDec emulations, we can show and empirically prove that validators geospatially distant from the other validators are disadvantaged. Finally, we propose and evaluate solutions to promote geospatial diversity that integrates with consensus protocols.

The contributions of this paper are four-fold. (1) Our key contribution is providing empirical evidence that minority validators are disadvantaged by design, both in practical blockchain applications and in emulations. (2) 
To our knowledge, we are among the first to build a blockchain consensus protocol emulator for analyzing geospatial distribution. (3) We provide an integrative solution to prevent geospatially distant validators from being punished. (4) We also introduce standardized metrics for measuring geospatial diversity for blockchains.

This paper is organized as follows. Section~\ref{sec:Background} provides the necessary background about consensus protocols and validator performance. We then provide empirical evidence from blockchains in practice in Section~\ref{sec:0LMotivation}. In Section~\ref{sec:emulatorGDI}, we design the GeoDec emulator and the GDI index. We show that minority validators are punished, then propose and evaluate our solutions in Section ~\ref{sec:saveminority}. In Section~\ref{sec:relatedwork}, we review related work. Finally, in Section~\ref{sec:conclusions}, we provide conclusions and discuss potential future directions for this research.

\section{Background}
\label{sec:Background}
\subsection{Blockchain and Consensus Protocols}
A blockchain is a sequence of blocks, and the process of reaching an agreement on the order of this sequence is known as the \textit{consensus protocol}. Validators, denoted by \begin{math}\textit{V} = {v_1, v_2, ..., v_t}\end{math}, are participants in the consensus protocol. They are responsible for proposing these blocks and verifying the correctness of each block. This set of validators is referred to as the validator set.

An epoch, denoted by $E_\tau$, is a fixed time interval $\delta$ during which the validator set \textit{$V_{E_\tau}$} remains consistent. After this interval, we move to a new epoch $E_{\tau+1}$ with a potentially different validator set \textit{$V_{E_{\tau+1}}$}.

Consensus protocols are at the heart of blockchains, and they must be Byzantine fault-tolerant (BFT) to withstand malicious validators in the validator set. These protocols can be broadly divided into two categories based on how they achieve finality. Probabilistic consensus mechanisms, such as Nakamoto consensus, guarantee finality over time, with the probability of finality increasing as more successive blocks are added. On the other hand, classical consensus mechanisms achieve instant definitive finality, even if up to 1/3 of the validator set is faulty. They accomplish this by requiring a supermajority quorum on every block, 
\begin{math} |\mathbb{Q}| \geq \lceil{(2/3)*|\textit{V}|} \end{math}.

For the purposes of this paper, we focus on blockchains that rely on classical consensus protocols. Since the formulation of the Byzantine generals problem in the 1980s~\cite{lamport2019byzantine}, we have seen numerous classical consensus protocols offering high throughput, low latency, and robust system designs~\cite{zhang2022reaching}. Among all classical consensus protocols, we choose HotStuff, a leader-based BFT consensus protocol for partial synchrony, for multiple reasons: (1) Linear communication complexity, reaching consensus in a single round in the best case with frequent leader rotation and threshold signature aggregation~\cite{yin2019hotstuff}. (2) Offers optimistic responsiveness; we reach consensus on a block once we have a quorum $\mathbb{Q}$ with no further delays~\cite{yin2019hotstuff}. (3) Experimental evaluations~\cite{alqahtani2021bottlenecks} proved that HotStuff offers higher throughput and lower latency than Tendemint~\cite{buchman2016tendermint} and PBFT~\cite{castro1999practical}. (4) Practical implementation on blockchains such as Diem, Aptos, and 0L~\cite{baudet2019state}.

In HotStuff, every block has a block proposer responsible for proposing the block and aggregating votes to build a quorum $\mathbb{Q}$. This protocol uses point-to-point communication between validators, requiring a complete mesh network topology with $|V|*(|V|-1)$ networking connections among validators.

By using PoS as a Sybil resistance mechanism, classical consensus can be applied in a decentralized environment~\cite {buchman2016tendermint}. However, in the context of this paper, we assume all validators have an equal stake, i.e., equal voting power in reaching the quorum. With minor modifications, this work can be extended to situations where validators have unequal weights in consensus.

\subsection{Validator Participation in Consensus}
In classical consensus protocols, each block requires the achievement of a quorum, denoted as $\mathbb{Q}$. It is reasonable to anticipate that not all validators will adhere to the protocol due to either malicious behavior or crash failures. This paper focuses on the latter, where a validator fails to actively participate in the consensus mechanism due to crash failures. Such validators can induce delays and, in some cases, impede the liveness of consensus protocols by halting block production~\cite{cohen2022aware, abraham2022s}. To maintain the integrity of the blockchain, these behaviors need to be penalized~\cite{motepalli2021reward}. Most blockchains, such as Ethereum~\cite{eth2stake}, Polkadot~\cite{polkadot}, and Cosmos~\cite{cosmos}, have implemented penalization mechanisms. These penalties can take the form of a fine paid in native tokens (slashing) or temporary removal from the validator set \textit{V} for a certain number of epochs (jailing).

We use the term "liveliness" to measure validator participation in the consensus protocol. \textit{Liveliness} is defined as the ratio of the number of blocks signed by a validator to the total number of blocks committed within an epoch~\cite{motepalli2022decentralizing}. If B($E_\tau$) represents the total number of committed blocks in an epoch $E_\tau$ and B($E_\tau$, $v_i$) represents the number of blocks signed by the validator $v_i$, the liveliness $l$ of validator $v_i$ in the epoch $E_\tau$ can be calculated as follows:
\begin{equation}
l_{E_\tau}(v_i) = \dfrac{B(E_\tau, v_i)}{B(E_\tau)} * 100
\end{equation}

This metric provides an indication of a validator's activity during a given epoch. At the end of each epoch, any validator that fails to meet a predefined liveliness threshold $\pi$ is considered to have experienced a crash failure and is subject to a penalty. It's important to note that we consider the number of committed blocks in this calculation rather than block proposals, as some proposed blocks may time out and never be committed.

\section{Empirical Observations From 0L}
\label{sec:0LMotivation}
The data underpinning this section is derived from empirical observations of the 0L blockchain\footnote{https://0l.network/}, a permissionless fork of the Diem blockchain~\cite{baudet2019state, biel2021zero, motepalli2022decentralizing}, which employs the HotStuff consensus mechanism. The duration of an epoch in this blockchain network is set to a day, with $\delta$ = 24 hours. 0L excludes validators that fail to meet the liveliness threshold ($\pi = 5$) from the validator set for the subsequent epoch. Specifically, if a validator $v_i$ has a liveliness $l$ that falls below $\pi$ at epoch $E_\tau$, 0L eliminates $v_i$ from the validator set for the following epoch $E_{\tau+1}$.

\begin{figure}[]
  \centering
  \includegraphics[width=0.5\textwidth, trim={0 0 7cm 0},clip]{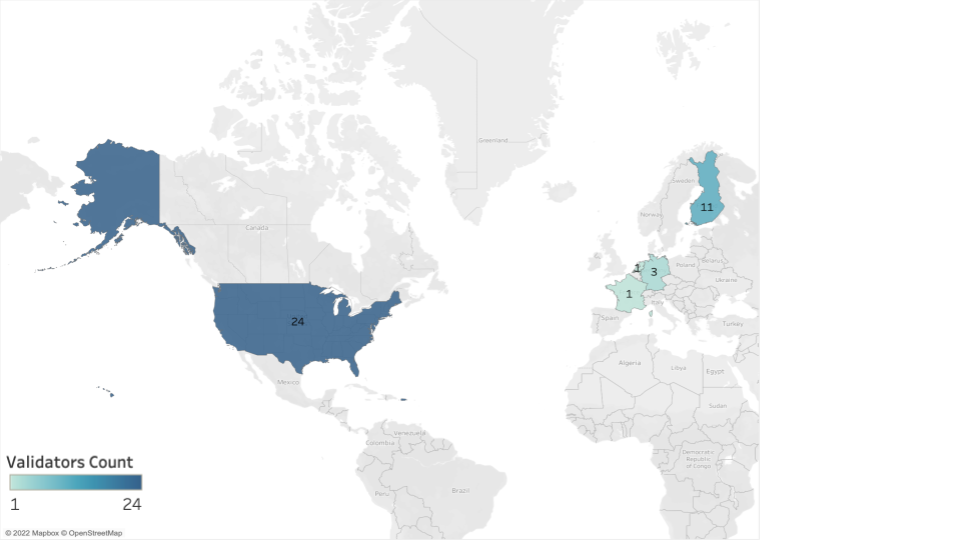}
  \caption{Distribution of validators of 0L, snapshot taken at epoch 317.}
  \label{fig:0Ldistribution_epoch317}
\end{figure}

Figure ~\ref{fig:0Ldistribution_epoch317} illustrates the geospatial distribution of validators at epoch 317. Out of 39 validators, most are located in the US (58.9\%) and Finland (28.2\%), with the remainder in Germany, France, and the Netherlands. The geospatial distribution of the 0L blockchain was monitored over the next 31 epochs, and the distribution was found to remain stable. At the end of epoch 348, the validator set size was 43, with 51.1\%, 30.2\%, and 11.6\% of validators located in the US, Finland, and Germany, respectively. There are single validators in Canada, France, and Lithuania. The central question is why all validators are concentrated in a few geospatial locations, and whether the underlying consensus protocol plays a role in this.

Further analysis revealed that validators from Singapore and Australia, despite joining the network, could not maintain their position in the validator set as they failed to meet the liveliness threshold ($l < \pi$) for the epochs during which they were part of the validator set $V$. A majority of validators voluntarily submitted their logs via Prometheus and Grafana, this data indicated that the validators from Singapore had not experienced crash-faults; they had synchronized the latest state of the ledger and maintained both outbound and inbound connections with most validators throughout the epoch. These findings led to the hypothesis that the consensus protocol is indeed responsible for limiting geospatial diversity.

The hypothesis posits that\textit{ due to the higher latency of geospatially distant validators, quora $\mathbb{Q}$ tend to form among localized validator communities. As a result, votes from geospatially distant validators are not counted towards the quorum}, leading to lower liveliness and perceived crash failure. This failure to meet the liveliness threshold ($l < \pi$) results in penalties for geospatially distant validators. In this context, the consensus protocol inadvertently undermines decentralization, leading to geospatial centralization by design.

This concern is not specific to the 0L blockchain alone. Aptos, another permissionless blockchain based on HotStuff, had rewards based on liveliness. Their lessons from running test networks were that rewards based on liveliness disincentivized geospatial distribution~\cite{AptosTestnet}. Furthermore, this is a concern for classical consensus protocols other than HotStuff. For instance, Narwhal and Tusk, the latest classical consensus protocol offering high throughput, noted that slow validators are indistinguishable from faulty ones~\cite{danezis2022narwhal}. Furthermore, the authors note that this is a limitation of asynchronous protocols with optimistic responsiveness~\cite{danezis2022narwhal}. Besides, previous studies of classical consensus protocols, including PBFT~\cite{castro1999practical}, focused on throughput and latencies and, hence, performed evaluations in local area networks. This approach overlooks how the protocol might perform in a wide area network. However, the liveliness of individual validators in a wide area network is vital for blockchains as it directly affects incentive structures for validators.

\section{GeoDec to Analyze Geospatial Distribution of Validators}
\label{sec:emulatorGDI}

\subsection{GeoDec Emulator}
\label{sec:geodecEmulator}
We introduce GeoDec, an emulator for studying the geospatial distribution of validators in a blockchain. GeoDec executes the underlying classical consensus mechanism, and we have chosen the HotStuff consensus protocol for its implementation~\cite{yin2019hotstuff}. GeoDec operates under the assumption that all validators are active, ruling out any crash or Byzantine failures.

GeoDec accepts a hashmap $\mathbb{C}$, representing the distribution of validators, as input. The keys in this hashmap are locations, expressed as cities, with the corresponding values being the number of validators in each location. We represent this distribution as $\mathbb{C} = { c_1: i, c_2: j, ..., c_m: p}$, where $c_1$, $c_2$, and $c_m$ represent cities and $i$, $j$, and $p$ are natural numbers denoting the number of validators at each location. Therefore, $\mathbb{C}[c_k]$ gives the number of validators in city $c_k$. The output of GeoDec is the liveliness $l$ of the validators, a measure of each validator's contribution to the consensus protocol. The liveliness $l$ depends not only on the location of a validator but also on the distribution of the rest of the validators.

The subsequent part of this section delves into the specifics of the GeoDec implementation. Developed in Python and using shell scripts to manage the validator cluster, GeoDec's code is open-sourced under Apache License 2.0 and hosted on GitHub\footnote{https://github.com/sm86/geodec-hotstuff}. The emulator runs on ComputeCanada~\cite{baldwin2012compute}, where a cluster of nodes acts as validators executing the HotStuff protocol~\cite{yin2019hotstuff}. The size of the validator set $|V|$ is adjustable, with a minimum of four validators, and an upper limit determined by the available computing resources. Nevertheless, the protocol's performance appears to deteriorate with more than a hundred validators. To ensure consistency, we use the same configurations across all nodes: 7.5GB RAM, two vCPUs, 56GB total disk, running on Ubuntu 20.04.5 LTS.

Several configurable parameters are available for benchmarking the underlying HotStuff protocol. We set the transaction size to 512 bytes and the batch size to 4000 transactions, as these parameters most closely resemble real-world blockchain infrastructures. With GeoDec assuming no crash failures or Byzantine faults, we increased timeouts to fifty seconds to ensure infinite time before a consensus round on any block times out. We retained the block proposer selection algorithm from HotStuff, choosing each block's proposer randomly using a round-robin mechanism. For repeatability, each GeoDec run outputs the mean of liveliness $l$ over five epochs, $E_\tau$ to $E_{\tau+5}$, with the duration of an epoch set to five minutes ($\delta$ = 300 seconds). The liveliness of a validator $l(v_k)$ for a run is given as:
\begin{equation}
l(v_k) = \frac{\sum_{t=\tau}^{\tau+5}l_{E_t}(v_k)}{5}
\end{equation}

GeoDec employs netem\footnote{https://wiki.linuxfoundation.org/networking/netem} to emulate network latencies between all pairs of validators. Although our implementation situates the validators within a local area network, we seek to emulate geographical distribution as if each validator were located at their designated physical location. Netem offers functionality for protocol testing by simulating the properties of a wide area network~\cite{hemminger2005network}. We use pairwise ping delays among 242 cities, collected by WonderProxy~\cite{wonderproxy}, to emulate a wide area network. During preprocessing, we grouped data by averaging the ping latencies recorded every hour for two days to determine pairwise ping delays. We also excluded a few cities lacking pairwise delays. GeoDec then uses this preprocessed data as input for netem to establish the pairwise delays between validators using IP addresses.

In conclusion, GeoDec is an emulator for blockchains operating on classical consensus protocols with preset parameters. One potential limitation of GeoDec is that it does not consider if some data centers in a given city have better connectivity than the rest within the same city. Nonetheless, GeoDec offers easily configurable parameters, making it adaptable for a range of applications. Moreover, it is infrastructure-agnostic, meaning it can be migrated to operate on different computing platforms such as AWS or Google Cloud.

\subsection{GeoSpatial Diversity Index}
\label{sec:gdi}
We propose the geospatial diversity index (GDI) as a metric to measure geospatial diversity. Why do we need such a metric? This metric is necessary to standardize the evaluation and analysis of geospatial distribution. Furthermore, GDI helps a blockchain measure its geospatial diversity and foster a dialogue around achieving more significant geospatial representation. 

Several characteristics are essential when considering the GDI metric. First, GDI should incorporate geospatial distance to capture validators' geospatial diversity. Second, each validator has a unique frame of reference and thus has a respective GDI. Third, a validator's GDI should depend not only on their performance but also on the location of the remaining validators. Finally, Earth is spherical, so the GDI should consider the distances based on latitudes and longitudes.

\begin{definition}
Geospatial Diversity Index of a Validator, $GDI(v_k)$: The GDI of a validator quantifies the geospatial diversity of validator $v_k$ relative to the rest of the validator set $V$. Specifically, $GDI(v_k)$ is the sum of the distances, from $v_k$, to all other validators within the set.
\end{definition}

We extract all validators' coordinates from the validator set's geospatial data, $\mathbb{C}$. We denote $\Delta(v_i, v_j)$ as the Haversine distance between validators $v_i$ and $v_j$. The Haversine distance is appropriate as it uses latitudes and longitudes to calculate the shortest path between two points along Earth's surface~\cite{robusto1957cosine}. The GDI for validator $v_k$ in epoch $E_\tau$ is then given as:

\begin{equation}
GDI_{E_\tau}^V(v_k) = \sum_{\forall v_i \in V_{E_\tau}}\Delta(v_i, v_k)
\end{equation}

$GDI_{V_{E_\tau}}(v_k)$ captures how far the validator $v_k$ is from the rest of the validator set. Units are in kilometers. As the validator set may change every epoch, the GDI for a validator may not be constant across epochs.

However, this equation has a significant limitation: it does not account for using a quorum in the consensus protocol. If a validator proposes a block, we only need two-thirds of the validator set $\mathbb{Q}$ to respond, not all validators. Therefore, we introduce $GDI_{\mathbb{Q}{E\tau}}(v_k)$ to capture the geospatial distance from the closest quorum of validators. We define $\bar{V}$ to capture that only the nearest validators are needed to form the quorum. The validator set $V$ is sorted based on distance $\Delta$ and trimmed to two-thirds to obtain $\bar{V}$. In this work, we refer to $GDI^{\mathbb{Q}}(v_k)$ whenever we mention $GDI(v_k)$.

\begin{equation}
GDI_{E_\tau}(v_k) = GDI_{E_\tau}^{\mathbb{Q}}(v_k) = \sum_{\forall v_i \in \Bar{V}_{E_\tau}}\Delta(v_i, v_k)
\end{equation}

We also deem it crucial to ask if we can measure GDI for the entire blockchain, as such a measure would offer a holistic view of the entire validator set's geospatial diversity. We define GDI for a blockchain as follows. 
\begin{definition}
 Geospatial Diversity Index of a Blockchain. ($GDI$). The GDI of a blockchain is a measure of how geospatially diverse a validator set is. We calculate it by taking the mean of the GDI of every validator in the validator set $V$.
\end{definition}
GDI for a blockchain is defined for an epoch, because the validator set might change every epoch. GDI for an epoch $E_\tau$ with validator set $V_{E_\tau}$ is given below. 
\begin{equation}
GDI_{E_\tau} = \frac{\sum_{\forall v_k \in V_{E_\tau}}GDI_{E_\tau}(v_k)}{|V_{E_\tau}|}
\end{equation}

So far, this section has focused on GeoDec and the GDI measures. We now leverage these to identify geospatial outliers; we start by defining minority validators. 

\begin{definition}
  Minority ($\mu$) is the location of the most geospatially distant validators. The minority set is subset of all cities $\mu \subset \mathbb{C}$.
\end{definition}

We refer to validators in minority cities as minority validators. We use the GDI of validators to identify minorities in the given validator set. A minority is a set of locations with the validators' GDI being higher than the $67^{th}$ percentile. We use percentiles to capture an individual GDI relative to the rest of the validator set. We choose the $67^{th}$ percentile because the cardinality of the minority validator set has to be less than one-third of the validator set, $\mathbb{C}[\mu] \leq (1/3 * |V|)$. To obtain the $67^{th}$ percentile, we order the GDI values of all validators in $V$ and mark the highest one-third as the minority $\mu$. Similarly, we now define the majority. 

\begin{definition}
  Majority ($\eta$). The majority refers to the location in the validator set that has enough validators in its reach to form a quorum $Q$. In other words, if over two-thirds of the validator set is in one city $c_p \in C$, i.e., $C[c_p] \geq (2/3 * |V|)$, we call that city the majority $\eta = c_p$. 
\end{definition}

We have, at most, one majority for any given validator set. The presence of a majority implies the geospatial centralization in that location. The output from a GeoDec execution comprises GDI measures and the liveliness.      

\section{Save the Minority Validators}
\label{sec:saveminority}
This section builds on the introduction of the GeoDec emulator and metrics for understanding the geospatial distribution of a blockchain, discussed in the previous section. It further explores how the consensus protocol design impacts minority validators and presents potential solutions to this challenge.

\subsection{Challenge: Minorty validators are punished}
Having developed the GeoDec emulator, we now revisit our hypothesis on whether consensus protocols consider geographically distant validators as having crash failures (see Section \ref{sec:0LMotivation}). We emulate the validator distribution of 0L on GeoDec to see if we could reproduce similar results. 

\begin{figure}[]
  \centering
  \includegraphics[width=0.5\textwidth]{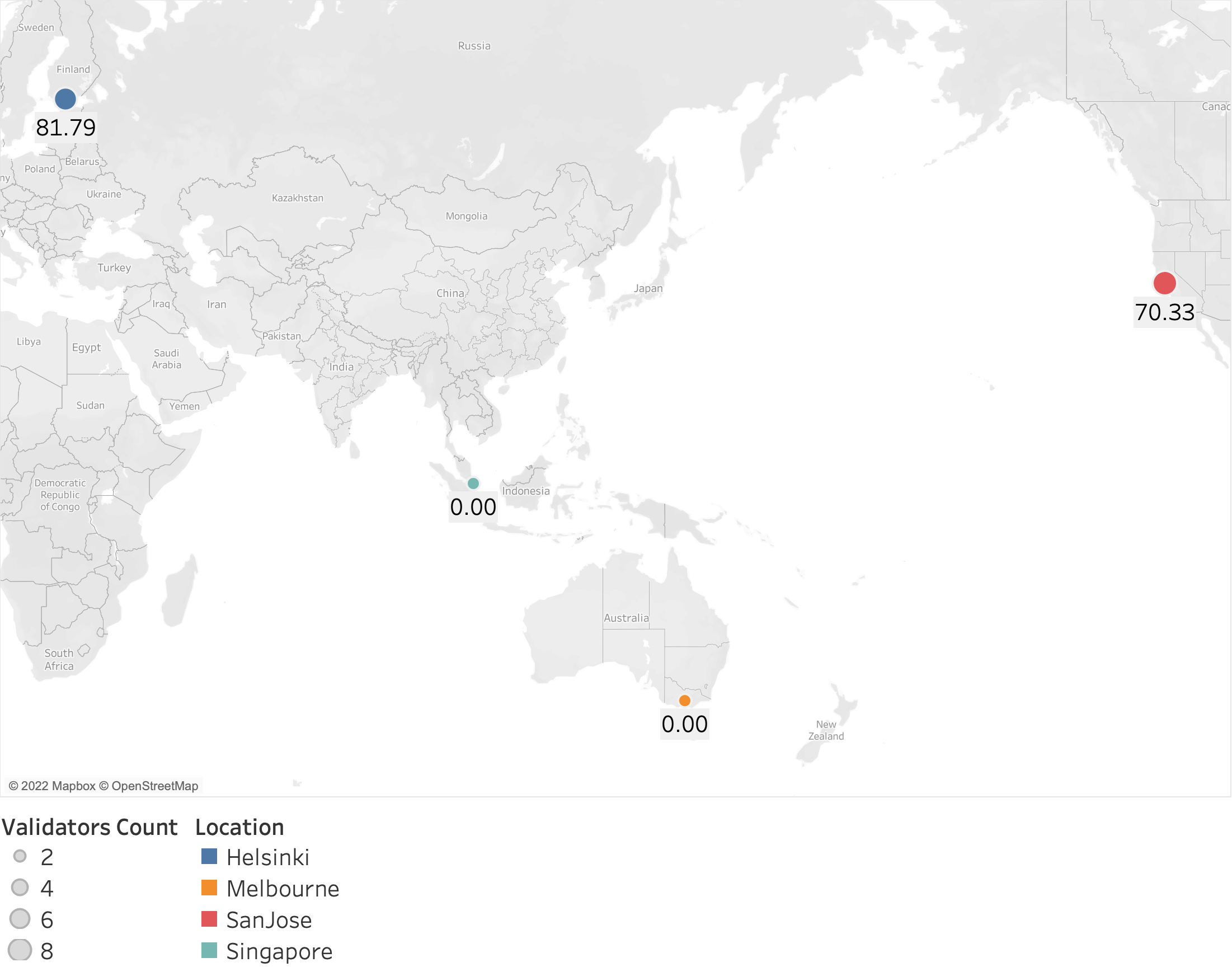}
  \caption{Liveliness data for where the most validators are in the US and Finland, and minority validators in Singapore and Australia.}
  \label{fig:twoMajorities}
\end{figure}

For our emulations, we fixed the validator set size to 16, concentrating most validators in two regions: San Jose, US, and Helsinki, Finland. We chose these locations as they tend to have the largest percentage of validators in 0L. In our first series of evaluations, we relocated a validator to Singapore resulting in the validator distribution $\mathbb{C} = \{ San Jose: 8,\:Helsinki:7,\:Singapore:1\}$. Subsequently, we ran evaluations with two validators in Singapore as minorities, resulting in the validator distribution, $\mathbb{C} = \{ San Jose: 7,\:Helsinki: 7,\; Singapore: 2\}$. Similarly, we conducted emulations for Melbourne, Australia, with $\mathbb{C}[Melbourne] = 1$ and $\mathbb{C}[Melbourne] = 2$, respectively.

\begin{figure*}[]
  \centering
  \includegraphics[ scale=0.22]{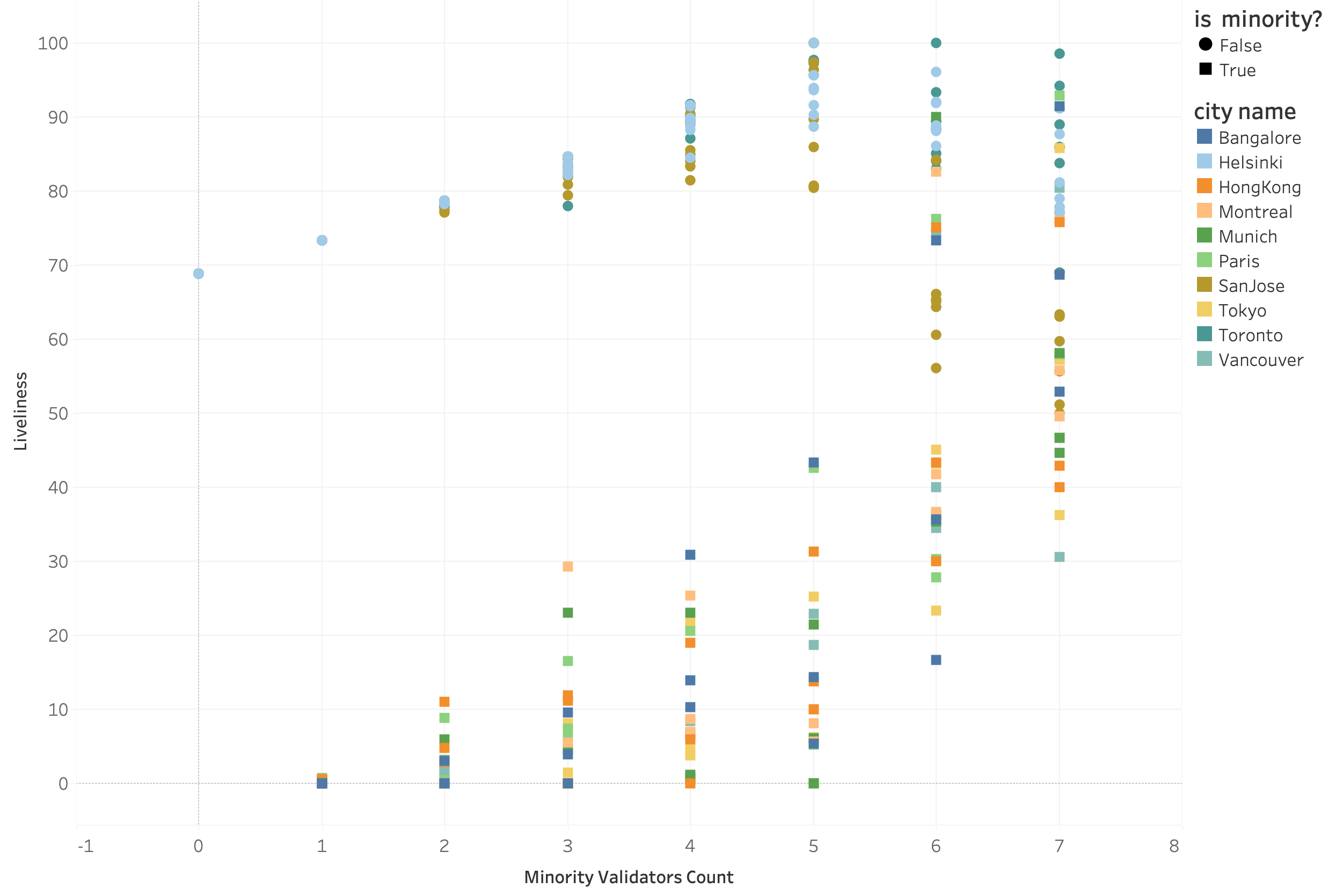}
  \caption{Liveliness of majority and minority validators with varying number of minority validators.}
  \label{fig:theoremPlot}
\end{figure*}
\begin{figure*}[]
  \centering
  \includegraphics[scale =0.26]{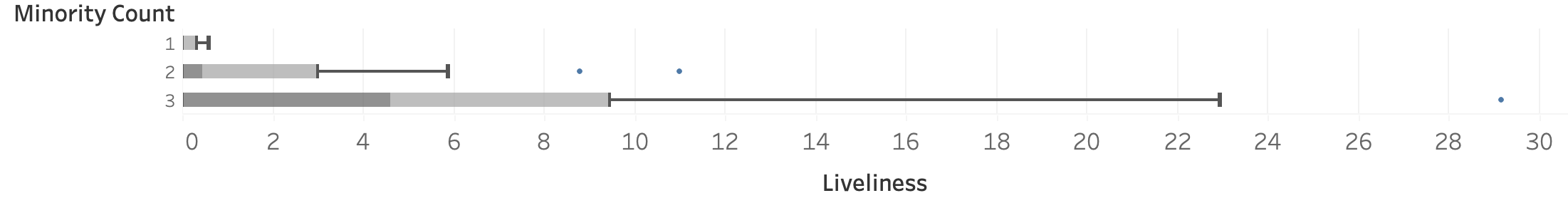}
  \caption{Boxplots of liveliness against minority validators count one, two, three.}
  \label{fig:boxplots}
\end{figure*}
These emulations confirm that the mean liveliness of all minority validators in every run is zero. In contrast, validators from San Jose and Melbourne have a liveliness of 81.79\% and 70.33\%, respectively, as shown in Fig.~\ref{fig:twoMajorities}. Using GeoDec, we are confident that all minority validators are live and not malicious. Additionally, all validators have access to similar computing resources. It is important to remember that our figures are a 2D representation of a spherical globe, and distances between locations may differ from how they appear. Nevertheless, these empirical observations are similar to that observed in 0L. Based on these insights, we now present the following observation.
\begin{theorem} 
Minority validators have the lowest liveliness. 
\end{theorem}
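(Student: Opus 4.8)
The plan is to translate a validator's liveliness into a purely combinatorial quantity determined by network delays, and then to show that this quantity is minimized exactly for the geospatially distant validators. First I would fix an epoch $E_\tau$ and analyze a single committed block with proposer $v_p$. Because GeoDec runs HotStuff with generous timeouts and no crash or Byzantine faults, the only reason a vote fails to be aggregated into the quorum certificate of that block is timing: by optimistic responsiveness the proposer assembles its quorum $\mathbb{Q}$ and advances the instant the $q$-th vote arrives, where $q=\ceil{(2/3)\lvert V\rvert}$, and later-arriving votes are never counted. Modelling the delay on link $(v_p,v_i)$ as a symmetric function monotone in the Haversine distance $\Delta(v_p,v_i)$ (as exhibited by the WonderProxy traces), validator $v_i$'s vote is included iff its round trip $2\Delta(v_p,v_i)$ is among the $q$ smallest round trips seen by $v_p$ --- equivalently, iff $v_i$ lies in the set $\mathrm{Near}_q(v_p)$ of the $q$ validators closest to $v_p$, which up to the trimming is exactly the set $\bar V$ from the definition of $GDI^{\mathbb{Q}}$ recentred at $v_p$.

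Next I would aggregate over the epoch. Under round-robin leader rotation each validator proposes a $1/\lvert V\rvert$ fraction of the $B(E_\tau)$ committed blocks, so, up to the proposer's own contribution, $B(E_\tau,v_i)=\frac{B(E_\tau)}{\lvert V\rvert}\,\bigl\lvert\{\,v_p\in V : v_i\in\mathrm{Near}_q(v_p)\,\}\bigr\rvert$. Hence $l_{E_\tau}(v_i)$ is a nondecreasing function of $N(v_i):=\lvert\{v_p\in V : v_i\in\mathrm{Near}_q(v_p)\}\rvert$, and the statement reduces to proving that $N(\cdot)$ attains its minimum on the minority set $\mu$.

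The core step --- and the main obstacle --- is linking $N(v_i)$ to the $GDI^{\mathbb{Q}}$ ordering that defines $\mu$. GDI is a \emph{sum} of distances to a trimmed set, not a statement about $v_i$'s rank as seen from each other proposer, so the two are not equivalent for arbitrary point sets (validators spread uniformly over the sphere give nearly constant GDI and nearly constant $N$). What makes the statement true in the regime of interest is the presence of a dominant cluster: when the majority $\eta$ --- or, as in the 0L emulation, a union of a few mutually nearby regions --- contains at least $q$ validators whose pairwise delays are all strictly smaller than the delay from any of them to a minority validator, then for every proposer $v_p$ inside that cluster the quorum closes within the cluster before any minority vote can arrive, so $v_i\in\mu\Rightarrow v_i\notin\mathrm{Near}_q(v_p)$; and since the cluster already furnishes $\ge q$ proposers, $N(v_i)\approx 0$ for $v_i\in\mu$ while $N(v_i)\approx\lvert V\rvert$ for every clustered validator. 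Substituting into the liveliness identity yields $l_{E_\tau}(v_i)\approx 0$ for minority validators and a near-maximal value for the rest --- precisely the ordering observed in Figs.~\ref{fig:twoMajorities}--\ref{fig:boxplots}.

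Finally I would dispatch the loose ends: ties or near-equidistant validators at the $q$-th arrival perturb $N$ only by $O(1)$; the proposing minority validator's signature on its \emph{own} block contributes nothing, since liveliness counts only votes aggregated into quorum certificates, so the bound $l_{E_\tau}(v_i)\approx 0$ stays clean; and QC-freshness/pacemaker effects for a perpetually excluded proposer are neutralized by GeoDec's fifty-second timeouts. The statement should therefore be read as holding in the dominant-cluster latency regime that is typical both in the controlled emulations and, per Section~\ref{sec:0LMotivation}, in deployed blockchains; the emulation data serve as the empirical complement to this argument.
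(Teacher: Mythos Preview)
Your approach follows the same core logic as the paper's own argument: a case split on the proposer's location, with the key observation that when a majority validator proposes, the quorum closes among geographically proximate validators before any minority vote can arrive. The paper's argument (presented as a ``Discussion'' rather than a proof) is considerably less formal than yours---it does not introduce the $\mathrm{Near}_q(\cdot)$ construction or the counting function $N(\cdot)$, nor does it tie the conclusion explicitly to the $GDI^{\mathbb{Q}}$ ordering or isolate the dominant-cluster hypothesis under which the statement actually holds---so your formalization is a genuine sharpening of what the paper sketches.

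One place where you diverge from the paper and where your argument wobbles: for the case in which a minority validator is itself the proposer, you claim that ``the proposing minority validator's signature on its own block contributes nothing, since liveliness counts only votes aggregated into quorum certificates.'' But in HotStuff the leader votes on its own proposal, and that vote arrives with essentially zero delay, so it is trivially among the first $q$ votes and \emph{is} aggregated into the QC. Under your own model this would give each minority validator liveliness $\approx 100/\lvert V\rvert\approx 6.25\%$ (for $\lvert V\rvert=16$) from the rounds it leads, not $\approx 0$. The paper handles this case differently: it argues that minority-proposed rounds are likely to time out and never enter $B(E_\tau)$ at all. That argument sits awkwardly with GeoDec's fifty-second timeouts, but it at least matches the observed zero liveliness. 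A cleaner repair for your write-up would be to invoke chained-HotStuff pipelining: even if the minority leader assembles a QC, the next (majority) leader may not receive it in time and will extend a different branch, so the minority-proposed block is orphaned rather than committed---which reconciles the long-timeout regime with the empirical zeros.
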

\begin{proof}[Discussion]
Consider a validator set $V$ with minority $\mu = c_m$, and the minority validator set has cardinality less than one-third of total validator set, $\mathbb{C}[\mu] \leq (1/3 * |V|)$. From the viewpoint of the minority, all other validators are so geospatially distant that they appear to be at the same location. Here, the majority $\eta$ refers to every region except the minority, $|V| - \mathbb{C}[\mu] \geq (2/3 * |V|)$.
Every block has a block proposer, and we have two possibilities for every block proposer; they could either be from a majority or a minority. 

Case 1: A majority validator. If we assume all validators are running honestly with no failures, the participation from the minority is zero. Minorities have zero participation because there is always a quorum $\mathbb{Q}$ among majority validators, as their communication delays are minimal. 

Case 2: A minority validator. This block cannot reach a quorum within minority validators and has to be broadcasted to majority validators. Since here we have classical consensus with partial synchrony, there is a high probability that this block round times out and most validators would suspect crash failure of the block proposer. Furthermore, even if these blocks get committed, the liveliness will still be lower because the likelihood of a block proposer being selected from the minority is lower than that of the majority. Hence, the liveliness of minority validators would be the lowest. 
\end{proof}

As a consequence of lower liveliness, the consensus protocol punishes the minority. This punishment is either jailing (removal from the validator set), slashing (penalty in native tokens), or both. We now turn to the experimental evidence in the following subsection. 

\subsection{Empirical evidence: Minority validators have low liveliness}
We employ GeoDec to demonstrate that minority validators have the lowest liveliness empirically. We select a pair of majority and minority validators, initially setting the number of minority validators to zero and gradually increasing their number until they approach half of the validator set's cardinality. Minority validators are selected from the following locations: Bangalore, Hong Kong, Montreal, Munich, Paris, Tokyo, and Vancouver, while most validators are from either Helsinki, San Jose, or Toronto. The validator set is consistently fixed at sixteen, necessitating a quorum of eleven validators to reach a consensus. The results comprise data from 167 runs of five epochs each, grouped by averaging each location's results for each run.

We plot liveliness (X-axis) against the number of minority validators (Y-axis); see Fig.\ref{fig:theoremPlot}. Each run results in two points, a circle for the majority and a square for the minority. The graph unequivocally indicates low liveliness for minority validators. Intriguingly, it also displays a gradual increase in the liveliness of minority validators corresponding to their rising count. We also utilize box plots to display the distribution for the cases of one, two, and three minority counts in Fig.\ref{fig:boxplots}. The median liveliness is 0, 0.46, and 4.61 for one, two, and three minority validators, respectively. None of these meet the liveliness threshold of blockchains, such as 0L ($\pi = 5$). Additionally, we observe a continuous increase in the liveliness of majority validators. This trend is likely because only two-thirds of all co-located, equally capable validators make it to the quorum $\mathbb{Q}$, which explains the peak liveliness of majority validators when the minority count is five $(1/3 * |V|)$. As we approach an equal distribution of minority and majority, their liveliness begins to overlap, further substantiating our hypothesis of low liveliness for minority validators.

\begin{figure}[]
  \centering
  \includegraphics[width=0.46\textwidth]{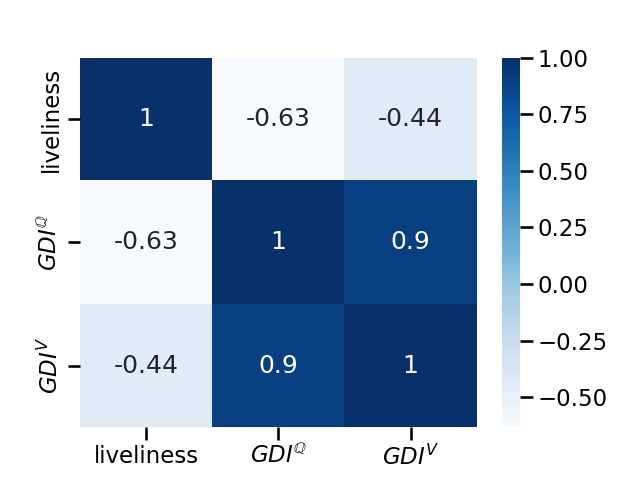}
  \caption{Heatmap of correlation between $GDI^V$, $GDI^\mathbb{Q}$ and liveliness.}
  \label{fig:heatmap}
\end{figure}

We then shift our focus to the GDI to ascertain its utility as a performance indicator for validators. We use the Pearson correlation coefficient to measure the linear correlation between two variables. Fig.~\ref{fig:heatmap} depicts a heatmap that plots the correlation among $GDI_\mathbb{Q}$, $GDI_V$, and liveliness $l$. The negative value indicates a negative correlation between GDI and liveliness, thereby confirming our hypothesis that a higher GDI leads to lower liveliness. Notably, $GDI_\mathbb{Q}$ is a more reliable metric than $GDI_V$. Specifically, the correlation between $GDI_\mathbb{Q}$ and liveliness is -0.631575 with a \textit{pvalue} of $1.6 * 10^{-304}$. These values indicate a strong negative correlation that is statistically significant (\textit{pvalue} $ < 0.05$).

These findings underline that consensus protocols cannot distinguish between crash failures and minority validators. This insight is critical to the design of reward mechanisms in a blockchain. Lower liveliness of minority validators results in penalties and potential removal from the validator set, creating an evolutionary incentive for the geographical centralization of validators. Paradoxically, while blockchains extol the virtues of decentralization, they inadvertently foster geospatial centralization by design.

\subsection{Solution: GDIException Smart Contract}
\label{sec:solution}
The findings of this paper thus far reveal that blockchain consensus protocols, by their very design, inadvertently encourage geospatial centralization. Two critical challenges need to be addressed: first, a revision of the consensus protocols to enhance geospatial diversity during validator selection, and second, the immediate need to refrain from penalizing minority validators who contribute to geospatial diversity in the validator set. This work focuses primarily on the latter issue.

Our approach is to identify the minority validators, reach a consensus on who minority validators are and facilitate their participation, rather than punishing them. We begin by introducing the constructs necessary for our solution. First, we introduce GeoDecLite, a lightweight version of GeoDec, that leverages the IP addresses of the validator set and flags the minority validators. Second, we develop GDIExceptionContract, a smart contract with three core functionalities:
\texttt{nominate()}: allows a validator to nominate itself as a minority validator.
\texttt{vouch()}: enables validators to endorse a minority validator.
\texttt{getMinorities()}: returns a list of nominees who have received a quorum of vouches. 

\begin{figure}[]
  \centering
  \includegraphics[width=0.45\textwidth]{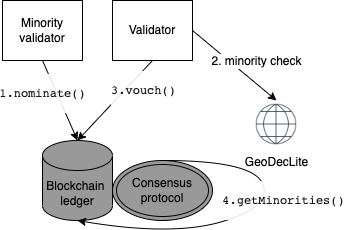}
  \caption{Workflow of the solution.}
  \label{fig:solutionWorkflow}
\end{figure}

We then outline the workflow of our solution in four steps, as illustrated in Fig.~\ref{fig:solutionWorkflow}. 
\begin{enumerate}
  \item During the epoch, the minority validator invokes the \texttt{nominate()} function to propose itself.
  \item All validators, upon detecting this committed event, verify the claim using GeoDecLite.
  \item If the claim is correct, validators utilize the \texttt{vouch()} function to endorse the minority validator.
  \item  At the epoch's conclusion, the consensus protocol invokes \texttt{getMinorities()} before considering penalties or removal from the validator set. Also, the protocol resets the data in the GDIExceptionContract at the onset of a new epoch.
  \end{enumerate}

We have favored this approach over an oracle solution to avoid reliance on external services. Another critical consideration was to prevent the consensus protocol from making external API calls, thus eliminating a potential attack vector. As the validation of minority validators requires a quorum of endorsements, the safety of this approach is on par with the underlying validator set and consensus protocol's safety guarantees. Moreover, deploying this solution on blockchains that support smart contracts is a straightforward process.

\subsection{Evaluation of our solution}
\begin{figure}[]
  \centering
  \includegraphics[width=0.5\textwidth]{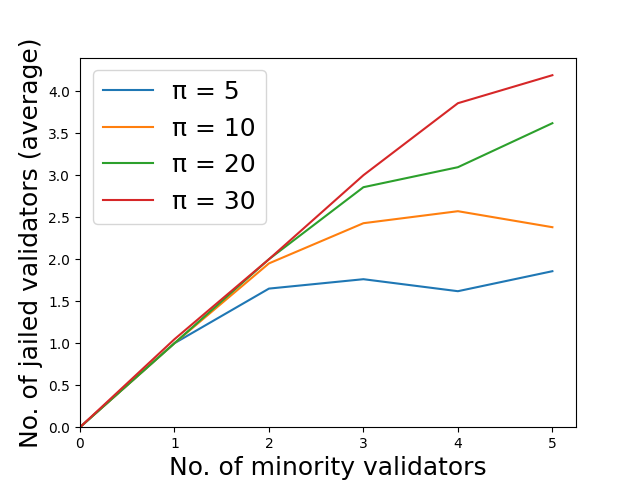}
  \caption{Number of minority validators that are jailed with varying liveliness thresholds $\pi$.}
  \label{fig:livelinessThreshold_solution}
\end{figure}
Now, we turn our attention towards the experimental evaluation of our solution. We delve into the data obtained from previous experiments to comprehend the number of validators who could potentially avoid penalties if our solution were implemented. We experimented with varying liveliness thresholds ($\pi$): 5, 10, 20, and 30. Our analysis indicates that most jailed validators are minorities, all of whom could invoke the \texttt{nominate()} function in GDIException to evade jailing or penalties. Assuming minority validators invoke this straightforward \texttt{nominate()} function call, Fig.~\ref{fig:livelinessThreshold_solution} illustrates the mean number of minority validators that could be spared from jailing. We computed these averages by grouping the total data according to the number of minority validators. Notably, our solution effectively safeguards all minority validators from being jailed, thereby enhancing geospatial distribution.

\begin{figure}[]
  \centering
  \includegraphics[width=0.5\textwidth]{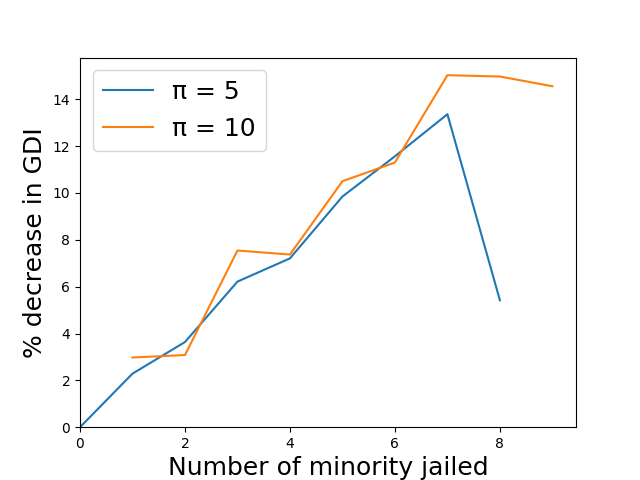}
  \caption{64 node emulations with varying $\pi$.}
  \label{fig:64_node}
\end{figure}

So far, we have examined a scenario with a single minority against a majority. We now expand our evaluations to consider how our solution fares in a scenario with potentially numerous minorities and no single majority. Reflecting observed practices in 0L and Ethereum, we assume a quorum $\mathbb{Q}$ in the validator set $V$ from the US and Europe, with the remainder being geographically distributed outside these regions. Moreover, we increase the size of the validator set to 64 ($|V|$ = 64) for the subsequent emulations. Fig.~\ref{fig:64_node} illustrates the results from 69 runs, plotting the percentage decrease in GDI against the number of jailed minority validators. Our solution aims to address this decline in GDI. 

\section{Related Work}
\label{sec:relatedwork}
To the best of our knowledge, we are among the pioneers in formulating and demonstrating the issue of minority validators being punished, providing empirical evidence from both real-world blockchains and controlled experiments. We acknowledge that the authors of Narwhal and Tusk have noted that minority validators are indistinguishable from faulty ones, presenting this as a limitation of classical BFT consensus protocols~\cite{danezis2022narwhal}. However, their work primarily focuses on presenting consensus mechanism, rather than enhancing geospatial decentralization.

Our work complements the ongoing research on blockchain decentralization~\cite{gochhayat2020measuring,gencer2018decentralization,jia2022measuring,lee2020measurements}. Yet, only a handful of studies concentrate on the geospatial dimension~\cite{zook2022mapping,holicka2021emerging}. In a recent exploration of a stratified approach to blockchain decentralization, geography is considered a critical component, with physical safety and regulatory compliance identified as potential threats posed by geospatial centralization~\cite{karakostas2022sok}. Another research study delving into the impact of geodistribution and mining pools on blockchains demonstrates that a geospatial location significantly affects block reception time~\cite{silva2020impact}. However, these studies primarily focus on PoW blockchains and do not account for factors that could impede the participation of geospatially distant validators in the consensus protocol.

\section{Conclusions}
\label{sec:conclusions}
In this paper, we defined the problem that minority validators are often misclassified as crash failures. We have provided empirical evidence to support this claim in practice and controlled environments. To facilitate these controlled experiments, we developed the GeoDec emulator, a tool whose full potential has yet to be fully tapped in this study. Importantly, we have proposed an easily integrated solution to safeguard minority validators from punishment. We intend to leverage our findings to enhance geospatial decentralization in the validator selection process, aiming to promote it actively, not just preserve it.

\section*{Acknowledgment}
This work has in part been supported by NSERC and ORF.

\bibliographystyle{IEEEtran}
\bibliography{references}

\end{document}